%
%
%


\documentclass[12pt]{amsart}
\usepackage{amsfonts, amsmath, amssymb}
\usepackage{amsthm}
\usepackage{graphicx}
\usepackage{float}
\usepackage{enumitem}

\usepackage[utf8]{inputenc} 
\usepackage{enumitem}
\usepackage{lipsum}  

\usepackage{graphicx}
\usepackage{subcaption}
\usepackage[utf8]{inputenc}

%

%
%



\newtheorem {Theorem}{Theorem}
\numberwithin{Theorem}{section}

\newtheorem {Proposition}[Theorem]{Proposition}

\theoremstyle{definition}
\newtheorem{Definition}[Theorem]{Definition}

\theoremstyle{remark}
\newtheorem{Remark}[Theorem]{Remark}

%
\expandafter\chardef\csname pre amssym.def
at\endcsname=\the\catcode`\@ \catcode`\@=11
\def\undefine#1{\let#1\undefined}
\def\newsymbol#1#2#3#4#5{\let\next@\relax
 \ifnum#2=\@ne\let\next@\msafam@\else
 \ifnum#2=\tw@\let\next@\msbfam@\fi\fi
 \mathchardef#1="#3\next@#4#5}
\def\mathhexbox@#1#2#3{\relax
 \ifmmode\mathpalette{}{\m@th\mathchar"#1#2#3}%
 \else\leavevmode\hbox{$\m@th\mathchar"#1#2#3$}\fi}
\def\hexnumber@#1{\ifcase#1 0\or 1\or 2\or 3\or 4\or 5\or 6\or 7\or 8\or
 9\or A\or B\or C\or D\or E\or F\fi}

\font\teneufm=eufm10 \font\seveneufm=eufm7 \font\fiveeufm=eufm5
\newfam\eufmfam
\textfont\eufmfam=\teneufm \scriptfont\eufmfam=\seveneufm
\scriptscriptfont\eufmfam=\fiveeufm

\catcode`\@=\csname pre amssym.def at\endcsname


\def    \p      {\partial}



\begin{document}





\title[Critical Community Size and Viral Strain Selection]{Deterministic Critical Community Size for the SIR system and Viral Strain Selection}
\author[M. F. Santos]{Marc\'ilio Ferreira dos Santos$^{*}$}
\author[C. Castilho]{C\'esar Castilho$^{\dagger}$}

\email{castilho@dmat.ufpe.br}

\bigskip

\maketitle

\centerline {$^{*}$ Núcleo de Formação de Docentes} \centerline{
Universidade Federal de Pernambuco} \par  \centerline{Caruaru, PE
CEP 55014-900 Brazil}\par \centerline{\email{marcilio.santos@ufpe.br}}\par
\vspace{0.5cm}
\centerline {$^\dagger$ Departamento de Matem\'atica} \par \centerline{
Universidade Federal de Pernambuco} \par  \centerline{Recife, PE
CEP 50740-540 Brazil}  \vspace{0.2cm}

\begin{abstract}
  In this paper the concept of Critical Community Size (CCS) for the deterministic SIR  model is introduced and its consequences for the disease dynamics are stressed.  The disease can fade out after an outburst. Also the principle of competitive exclusion holds no longer true. This is exemplified for the dynamics of two competing virus strains. The virus with higher $R_0$ can be eradicated from the population.
\end{abstract}

\vspace{0.4cm}
\centerline{ {\bf Key Words:} 
Epidemics, Virulence, SIR Modes, Evolution Theory.} \vspace{0.4cm} 
\centerline{{\bf AMS :}
92D30, 34C60, 37N25}

\section{Introduction}

  The  Critical Community Size (CCS) of a infectious disease is defined  as the minimum size of a closed population within which the disease's pathogen can persist \cite{bartlett1957measles,bartlett1960critical}. When the size of the  population is smaller then the CCS , the low density of infected hosts causes the extinction of the pathogen after an epidemic outbreak. In this case the disease is said to fade out \cite{anderson1979population,anderson1992infectious}. Classical SIR deterministic models for direct contact viral diseases \cite{brauer2019mathematical} fail to capture the fade out phenomena: either the number of infected converges to an endemic equilibrium after successive outbreaks or it disappears without any outbreak; the fate of the disease depending on a bifurcation parameter, called the basic reproductive number $R_0$ \cite{macdonald1952analysis,dietz1993estimation}.  \par
  
    One of the reasons for the SIR breakdown to capture the disease fade out for small populations is the  use of real numbers to count individuals. While being a good approximation for  large populations, counting individuals using real numbers has dramatic consequences when the number of individuals within a particular compartmental class becomes smaller then one and therefore extinct: the SIR model  fails to capture the small population extinction. In this paper extinction is incorporated into the model. When one of the variables representing the compartmental classes becomes smaller then  one it is immediately set to zero. This  has two important consequences. First
    the concept of Critical Community Size appears naturally. Second the principle of competitive exclusion no longer holds. \par 
    In a realistic context, when the number of individuals is too small, the probability of disease eradication is high. This Allé
    The paper is organized as follows. In SECTION \ref{SIR}
 the CCS for the classical SIR model with constant population is defined. The definition follows directly: for a population of $N$ individuals the disease will be eradicated if the density of  infected individuals becomes smaller then $\frac{1}{N}$. This simple fact,  that can be also interpreted as a consequence of the Alle effect \cite{allee1932studies,courchamp2008allee}, allows for the determination of the minimum viable population \cite{traill2007minimum} for a disease. This minimum viable population does not dependent on the disease $R_0$ value, instead it depends non trivially on the parameters reflecting the scaling properties of the SIR system. Curves for the CCS in terms of the parameters of the SIR system are exhibited. \par
 In SECTION \ref{competition} we study the consequences of the fade out on the dynamics of two different  competing virus strains. It is numerically shown that the Principle of Competitive Exclusion (PCE) \cite{bremermann1989competitive} is no longer true: A strain with smaller $R_0$ can eliminate one with higher $R_0$. On section \ref{conclusions} we draw our conclusions
\section{SIR model and Critical Community Size}
\label{SIR}
Let $S(t)$, $I(t)$ and $R(t)$ denote the number of susceptibles, infected
and removed at time $t$ respectively and $N(t) = S(t) + I(t) + R(t)$ be the total number of individuals. The SIR model states that
$$
\begin{array}{l}
S^{\, \prime} = - \beta \, \frac{S \, Y}{N} + \mu \, N - \mu \, S   \, ,\\
\\
Y^{\, \prime} = \beta \, \frac{S \, Y}{N} - \mu \, Y - \gamma \, Y  \, ,\\
\\
R^{\, \prime} = \gamma \, Y  - \mu \, R \, ,
\end{array}
$$
where $\beta$ is the infection rate, $\gamma$ is the clearance rate and
$\mu$ is the  mortality rate (assumed equal to the birth rate).
Adding the equations it follows that $N(t)$ is a first integral.
Introducing the densities $s = \frac{S}{N}$, $y = \frac{Y}{N}$ and $r = \frac{R}{N}$ , equations become
\begin{equation}
\begin{array}{l}
s^{\, \prime} = - \beta \, s \, y  + \mu \, ( 1 - s )   ,\\
\\
y^{\, \prime} = \beta \, s \, y  - (\mu + \gamma) \, y \, .\\
\end{array}
\label{eq.sir_normalizado}
\end{equation}
The equation for $r(t)$ is omitted since $r(t) = 1 - s(t) - y(t) \, .$

  \par

The qualitative dynamics of the above system is determined by the bifurcation parameter
$$ R_0 = \frac{\beta}{\gamma + \mu } $$
called the disease basic reproductive number. It represents the average number of cases caused by one infected individual in a totally susceptible
population. The following are well known facts \cite{brauer2019mathematical}. \par \bigskip
i) If $ R_0 < 1 $ the dynamical system has only one equilibrium
point $E_0=(1,0)$ called the disease free equilibrium point. $E_0$ is a globally stable critical point. \par \bigskip 
ii) If $R_0 > 1$ the dynamical system has two equilibria. $E_0$ an unstable critical point and $E_1 =(s^*,y^*) = \left( \frac{1}{R_0} \, , \, \frac{\mu}{\beta + \mu} \, \left( 1 - 1/R_0 \right) \right)$  a globally stable critical point. The global stability of $E_1$ can be proved using the Lyapounov function \cite{korobeinikov2002lyapunov}
\begin{equation}
V(s,y) = s^* \, \left( \frac{s}{s^*} - \ln\left(\frac{s}{s^*}\right) \right)
+ y^* \left(\frac{y}{y^*} - \ln\left(\frac{y}{y^*}\right)  \right) \, .
\end{equation}
Since $N$  represents the total number of individuals one must have  $ N \in \mathbb{Z}^+$. Accordingly, the smallest possible value for the densities  $ s(t) \,  , i(t)$ and  $ r(t)$ is  $\frac{1}{N}$ and if any of the densities becomes smaller than $\frac{1}{N}$ the correspondent population  becomes extinct. This simple fact is not usually taken into consideration and it is of utmost importance for what follows.  
\begin{Remark} Introducing a new independent parameter defined as	$ \tau (t) =  \mu \, t $ 	the  SIR system becomes
	$$
	\begin{array}{l}
	\frac{ds}{d\tau} = - \frac{\beta}{\mu} \, s \, y +  \, 1 -  \, s   \, ,\\
	\\
	\frac{dy}{d\tau}  = \frac{\beta}{\mu} \, s \, y -  \, y \left( 1 - \frac{\gamma}{\mu} \right)  \, .\\
	\end{array}
	$$
	Therefore, the qualitative dynamics can be studied by fixing a $\mu$ value and considering the $\gamma$ and $\beta$ parameters as measured in $\mu$-units.
\end{Remark}

\subsection{Fade out for the SIR model}
The disease is said to fade out if it disappears in a finite time after some epidemic outbreak. For the deterministic SIR model this is only possible if for some time instant the number of  infected individuals is smaller then one, or equivalently if $y(t) < 1 / N $.

\begin{Definition}
Let $R_0 > 1$. The disease will {\it fade out} if  for some time instant  $t^*$
$$y(t^*) = 1/N \, \, \, \text{and} \, \, \, y^{\, \prime}(t^*) < 0 \, . \label{init2}$$ 
\end{Definition}
The $R_0 > 1 $ condition allows for  the possibility of epidemic outbreaks.\par 
 In what follows the fade out phenomena for the SIR will be characterized for the initial conditions
 \begin{equation} s(0)= 1 - \frac{1}{N} \, \, \, , \, \, \, y(0)=\frac{1}{N} \, \, \, \text{and} \, \, \, r(0)=0 \label{init} \end{equation}
 representing the invasion of a totally susceptible community by only one infected individual.
The characterization of the fade out for the SIR model is done by noticing that the local minima for the $y(t)$ curve form a monotonic increasing  sequence (see Appendix A for a proof of this fact). Therefore, in the case it exists, the first local minimum of $y(t)$ is also the global minimum for $ t > 0 $. 

\begin{Definition} Assume $R_0 = \frac{\beta}{\gamma + \mu} > 1 \, .$
Define	$\Psi = \Psi(\beta , \gamma , \mu , N)$ as the value of the first minimum of
	$y(t)$ with $t>0$,  for the initial conditions \ref{init}. If
	$y(t)$ has no minimum value for $t> 0 $ set $\Psi = y^* \equiv \frac{\mu}{\beta + \mu}\, ( 1 - 1 /R_0) $.
\end{Definition}

\begin{Definition} Let $\gamma$, $\beta$, $\mu $ all fixed and positives. $\tilde N > 0 $ is called the {\it Deterministic Critical Community Size} if 
	\begin{equation} 
	 \label{ccs}
	 \Psi(\beta , \gamma , \mu , \tilde N) = 1 /  \tilde N \, . 
	 \end{equation} 
\end{Definition}
If the population is smaller then $\tilde N$ the disease will  fade out and if the population is greater or equal to $\tilde N$ the number of infected individuals will converge to its limit value  $y^*$. The numerical procedure for the determination of the CCS is straightforward: Let $\alpha \, , \, \beta $ and $\gamma$ such that $R_0 > 1$.  Fixing a value for $N$, integrate  system (\ref{eq.sir_normalizado}) up to the first $y(t)$ local minimum (with $t>0$). If the minimum value is equal to $1 / N$ then $N$ is the CCS. If not, change the $N$ value and proceed on the same way. A bisection method can be used to determine the CCS.\par
\begin{figure}[H]    
	\includegraphics[width=10.8 cm, height= 8 cm]{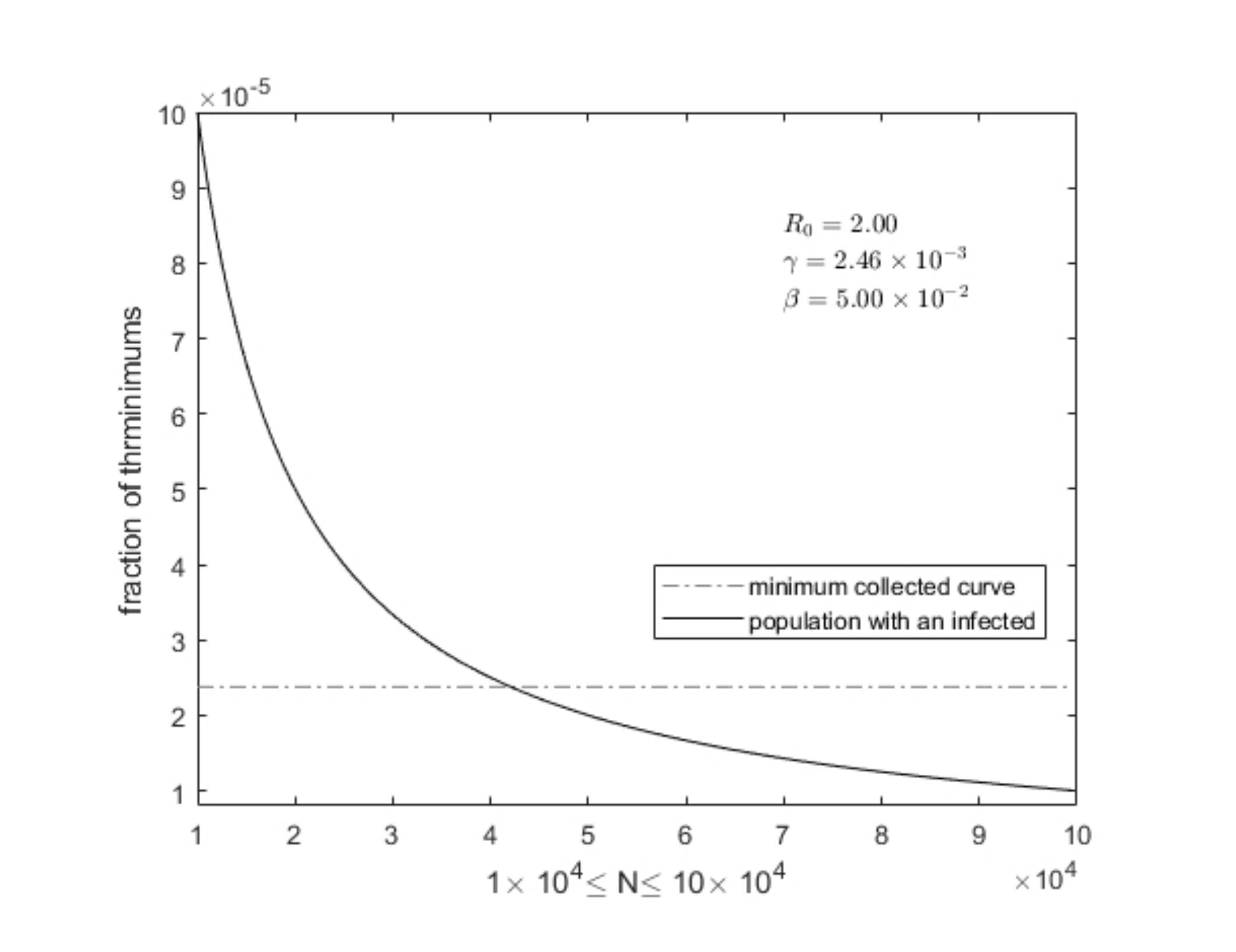}
	\caption{Numerical determination of the Critical Community Size (CCS).}
\end{figure} 

The CCS dependence on the parameters $\gamma$ and $\beta$ is showed on figure 
\begin{figure}[H] 
	\label{figccs}     
	\includegraphics[width=10.8 cm, height= 8 cm]{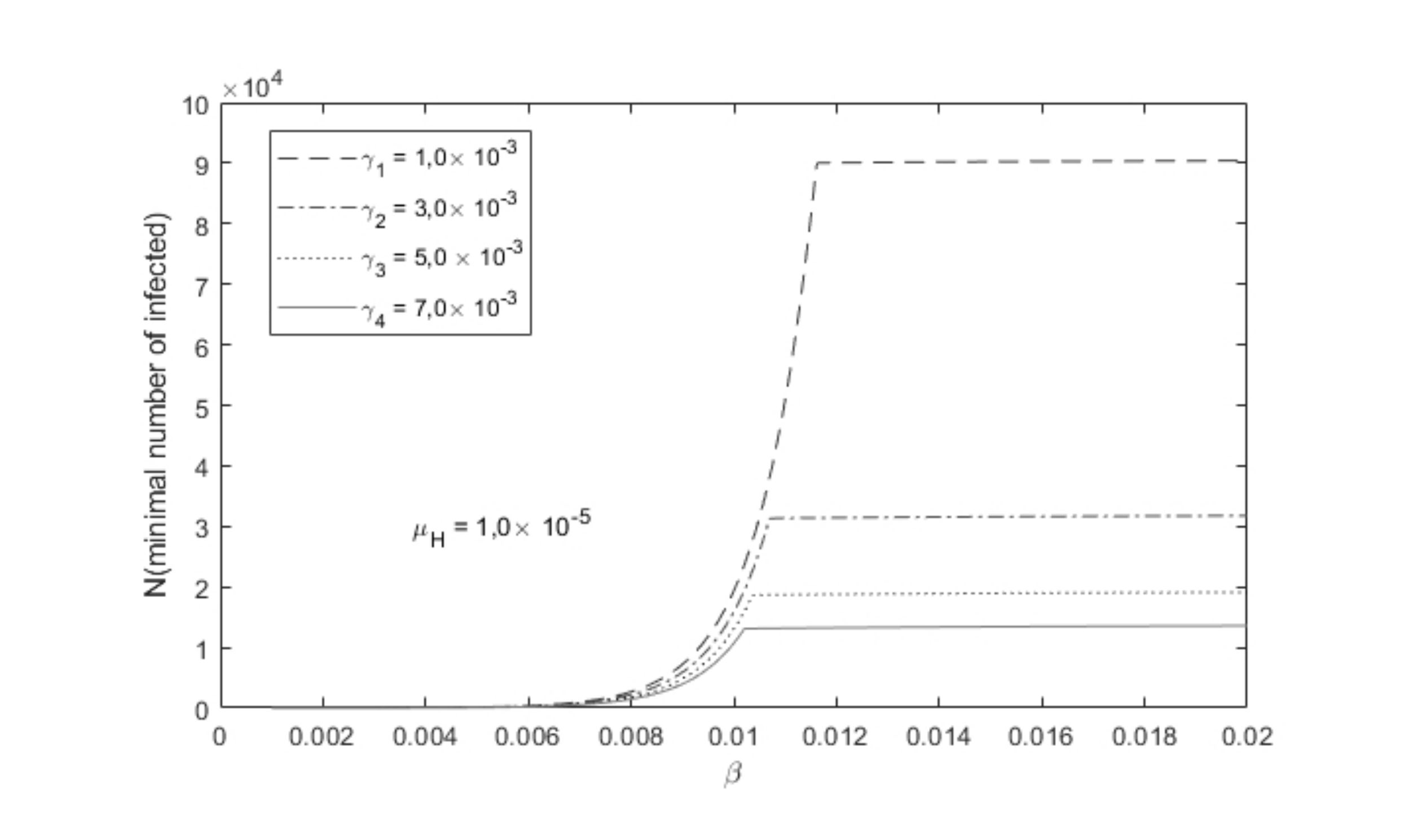}
	\caption{CCS  as a $\beta$ function.  $\gamma$ values are indicated on figure.}
\end{figure}

\begin{figure}[H]
	\label{figccs}   
	\centering
	\includegraphics[width=10.8 cm, height= 8 cm]{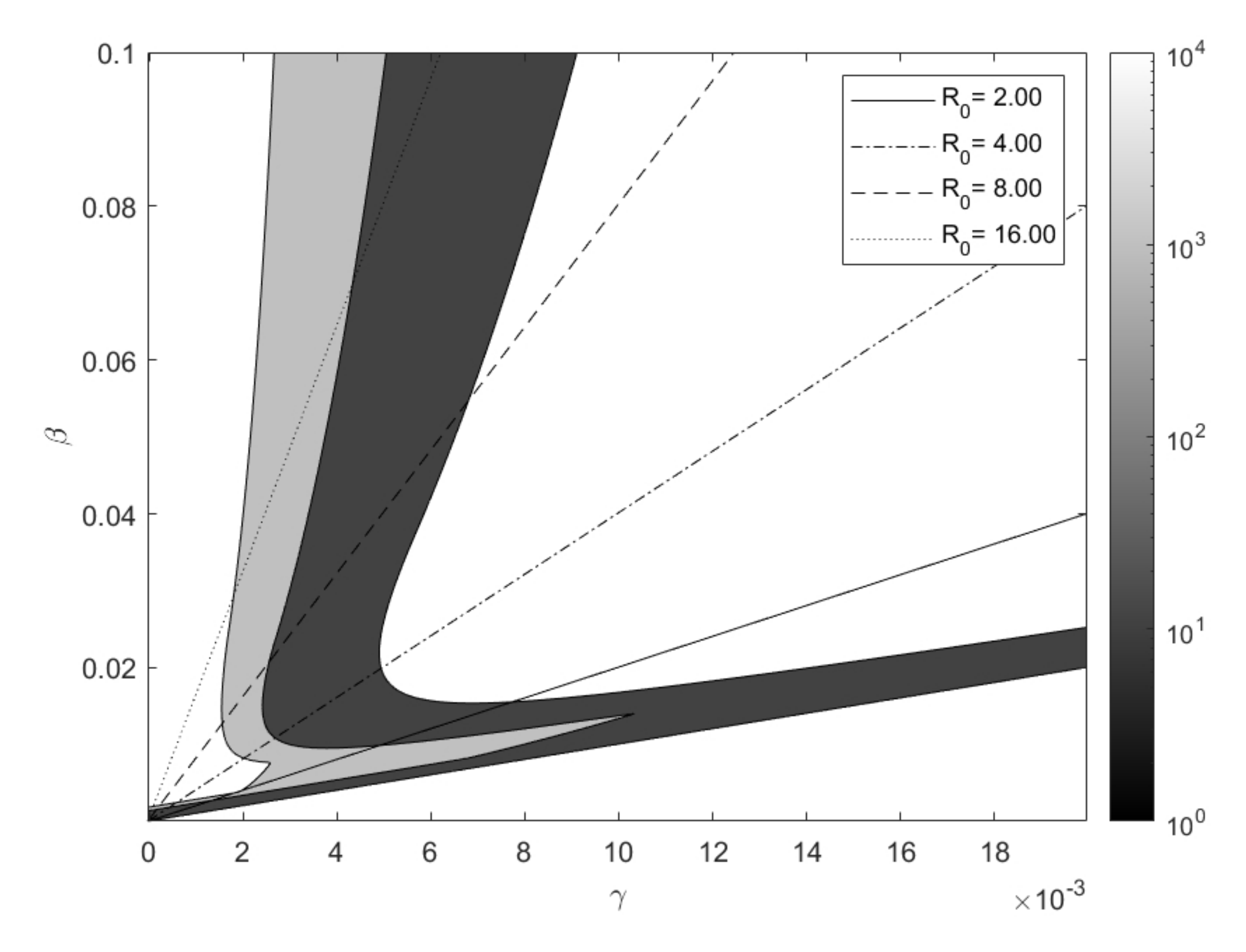}
	\caption{Critical Community Size (CCS) as a function of $\gamma$ and $\beta$. The level curves represent the CCS values. The mortality rate is $\mu = 1.0 \times 10^{-5}$.}
\end{figure} 

The numerical results are exhibited in figure (1) for $\mu = 1.0 \times 10^{-5}$. As an example notice that considering a population of 10000 persons all the parameters values in the yellow area will drive the disease to extinction in figure (3). The disease will persist for all parameters values outside the yellow region.
Also notice that  the level curves coalesce very near the $R_0 = 1 $ curve for larger values of the clearance rate. The consequence of this coalescence is that the CCS values for diseases with $R_0 \approx 1 $  can assume any value. The numerical results shows that the $R_0$ value, while determining the global dynamics of the SIR system, does not determine the CCS.

\section{Viral Competition under Environmental Selective Pressure}
\label{competition}
In this section  the implications of the CCS for a system with two competing virus strains will be explored. The main fact here is that the Principle of Competitive Exclusion (PCE) will no longer holds. \par 
Let $S(t)$, $I_i(t) \, \, (i=1,2) \, $ , $R(t)$ denote the numbers of individuals of susceptibles, infected by strain $i=1,2 \, $, and removed at time $t$ respectively. Let $N(t) = S(t) + I_1(t) + I_2(t) + R(t)$ be the total number of individuals on the population at time $t$. Let 
 $$s = S/N  \, \, \, , \, \, \, y_i= I_i/ N \,  i=1,2 \, \, \, , \, \, \, \text{e} \, \, \, r = R/N \, ,$$ be the respective densities.   $N(t) = N \in \mathbb{Z}$ since it represents the total number of individuals on the population. Accordingly, the smallest possible value for the densities  $ s(t) \,  , i_i(t)$ and  $ r(t)$ is  $\frac{1}{N}$. In order to make the discussion more general, an Allee effect will be introduced: If $I_i(t) \le \rho $ for some critical community size $ \rho \ge \frac{1}{N} $ the disease will be considered extinct. $\rho$ is expected to depend on the population density, mixing, etc.
 
 The two-strain model is
$$
\begin{array}{l}
s^{\, \prime} = - \beta_1 \, ( y_1 + y_2 ) \, s   + \mu \,( 1 - s )   \, ,\\
\\
y_1^{\, \prime} = \beta_1 \, y_1 \, s   - \gamma_1 \, y_1 - \mu \, y_1 \, ,\\
\\
y_2^{\, \prime} = \beta_2 \, y_2 \, s   - \gamma_2 \, s_2 - \mu \, s_2  \, ,\\
\\
r^{\, \prime} = \gamma_1 \, s_1  + \gamma_2 \, s_2 - \mu \, r \, s
\end{array}
$$
 The CCS for the above system is difficult to determine since the minima are no longer monotonically ordered and the parameter space is larger. To show that the basic reproductive number of the virus is no longer the unique factor deciding the outcome of the viral competition the first minima of $I_1(t)$ and $I_2(t)$ are calculated for the following situation: The parameters of the virus-1 are held constant at $R_{1} = 3,20 \, , \, \beta_1= 9,0 \times 10^{-3} \, , \, \mu_1 = 5 \times 10^{-5} \, . $ For the virus-2 $R_2 = 3,20 \, , \, \mu_2 = 5 \times 10^{-5} \, $ and $\beta_2$ varying from $4,0 \times 10^{-3}$ to $19,0 \times 10^{-3}$. The noticeable point here is the inversion of the infected minima curves. For $\beta_2 < 3.0 \times 10^{-3} $ the minima of $y_1$ are greater then $y_2$ and for $\beta_2 > 3.0 \times 10^{-3}$ the opposite holds. For a population threshold of $1 \times 10^{-5}$ the virus-2 is eliminated if $\beta_2 < 3.0 \times 10^{-3} $ and 
 the virus-1  can be eliminated if $\beta_2 >  3.0 \times 10^{-3} $. Notice the value o $R_0$ has been held constant for both viruses. The simulation is not contradicting the PCE since both viruses have the same $R_0$ value.\par 
	\begin{figure}[H] 
		\label{figccs}     
	\includegraphics[width=10.8 cm, height= 8 cm]{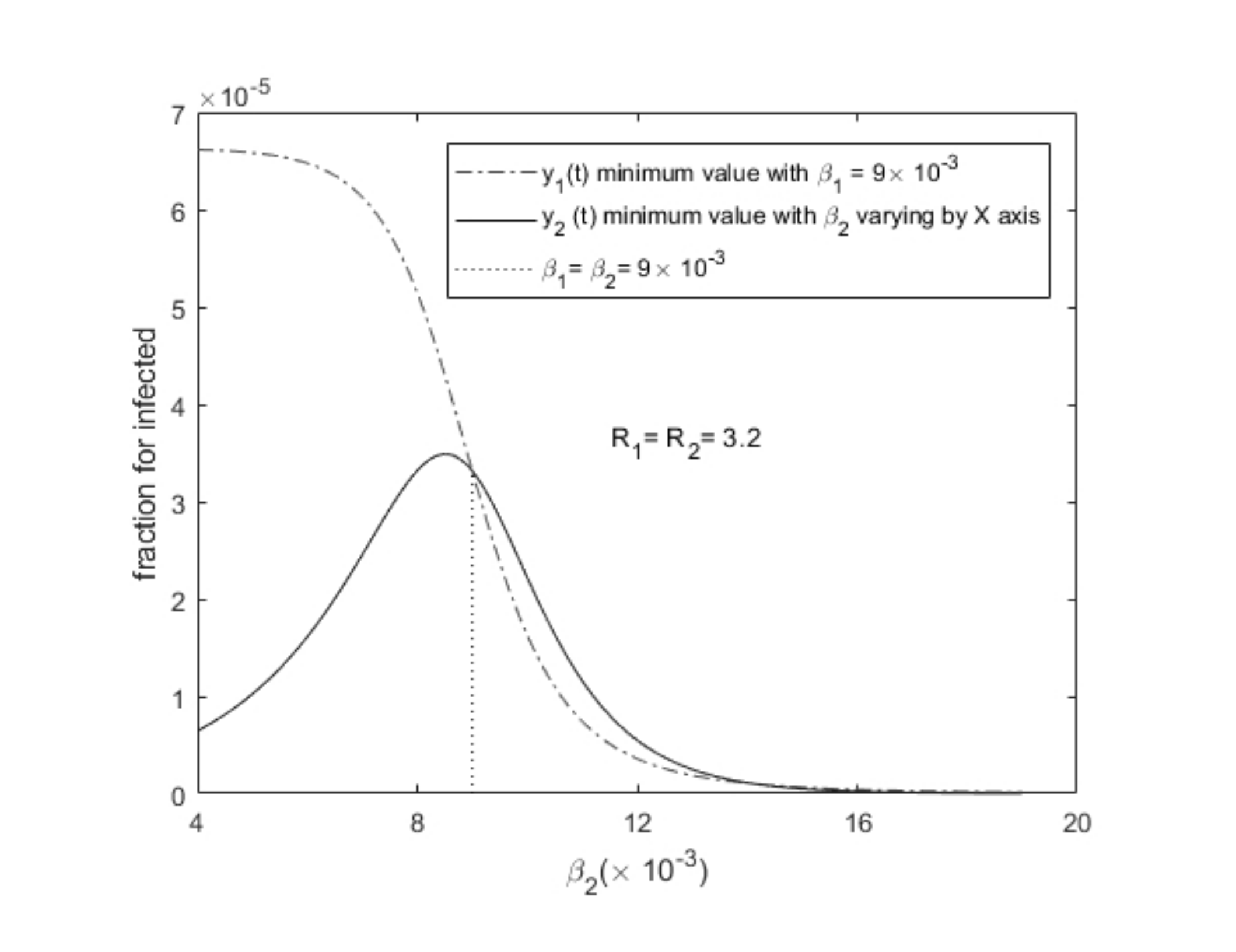} 
	\caption{First minima for the two virus strains as a $\beta_2$ function and fixed $\mu_H= 5\times 10^{-5}$.}
	\end{figure} 
	
		\begin{figure}[H] 
		\label{figccs2}     
		\includegraphics[width=10.9 cm, height= 8 cm]{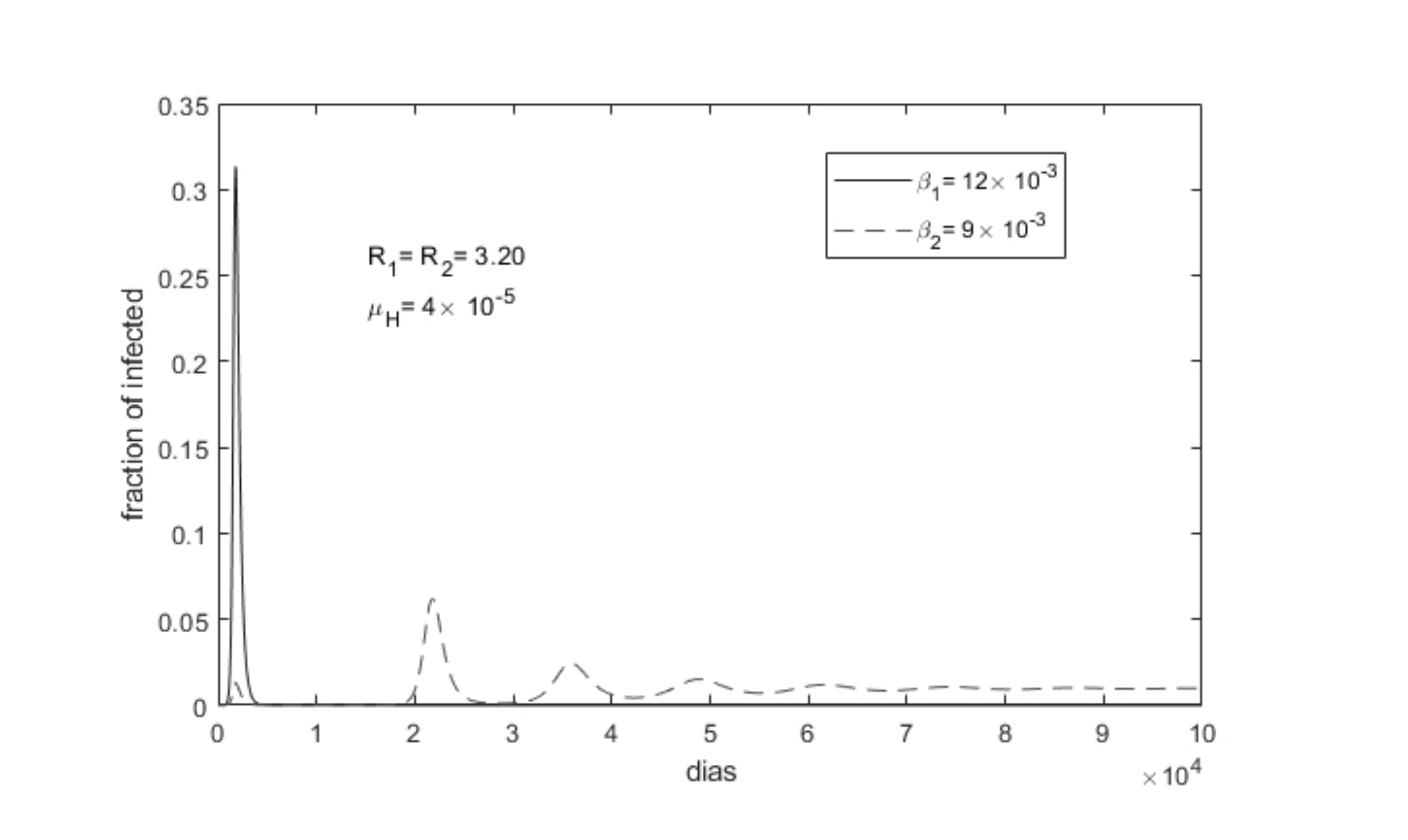}
	\caption{Elimination of the virus-1 after passing trough its first minimum.}
	\end{figure} 
	
	\begin{Remark}
	The example in figure (5) began with two people with a virus-1 and virus-2 strains each. Although after some time the strain with $\beta_1= 12\times 10^{-3}$ have no one infected individual. Therefore we can conclude that one strain go to extinction.     
	\end{Remark}
	
	The next simulation shows that the value of the first minimum 
	can be smaller for the virus with greater $R_0$ leading to its eradication and the permanence of 
	virus with smaller $R_0$ in contradiction to the PCE. 
	
	\begin{figure}[!h] 
	\label{figccs}     
	\includegraphics[width=10.8 cm, height= 9 cm]{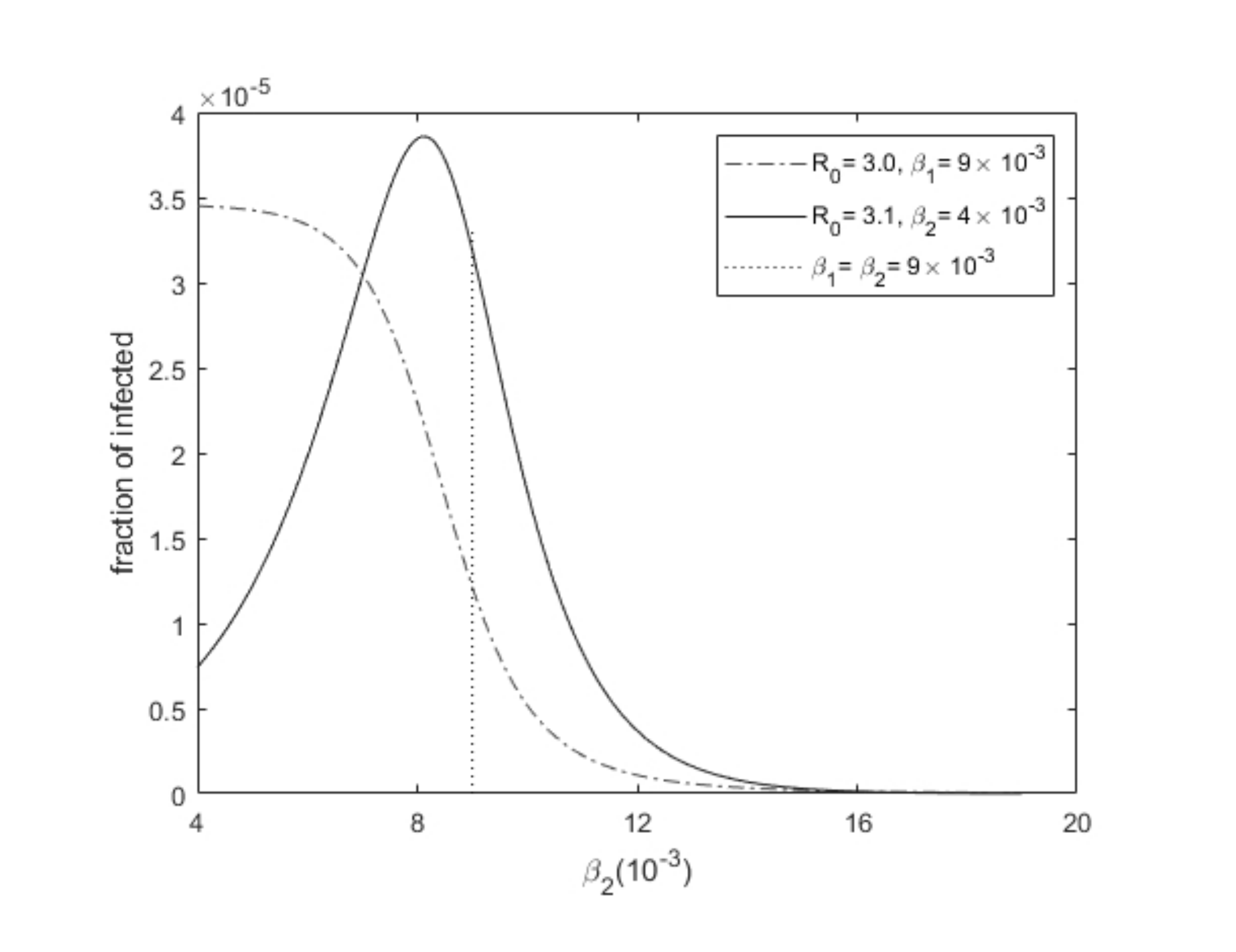} 
	\caption{First minima for the two virus strains as a $\beta_2$ function.}
\end{figure}

\begin{figure}[!h] 
	\label{figccs}     
	\includegraphics[width=10.8 cm, height= 9 cm]{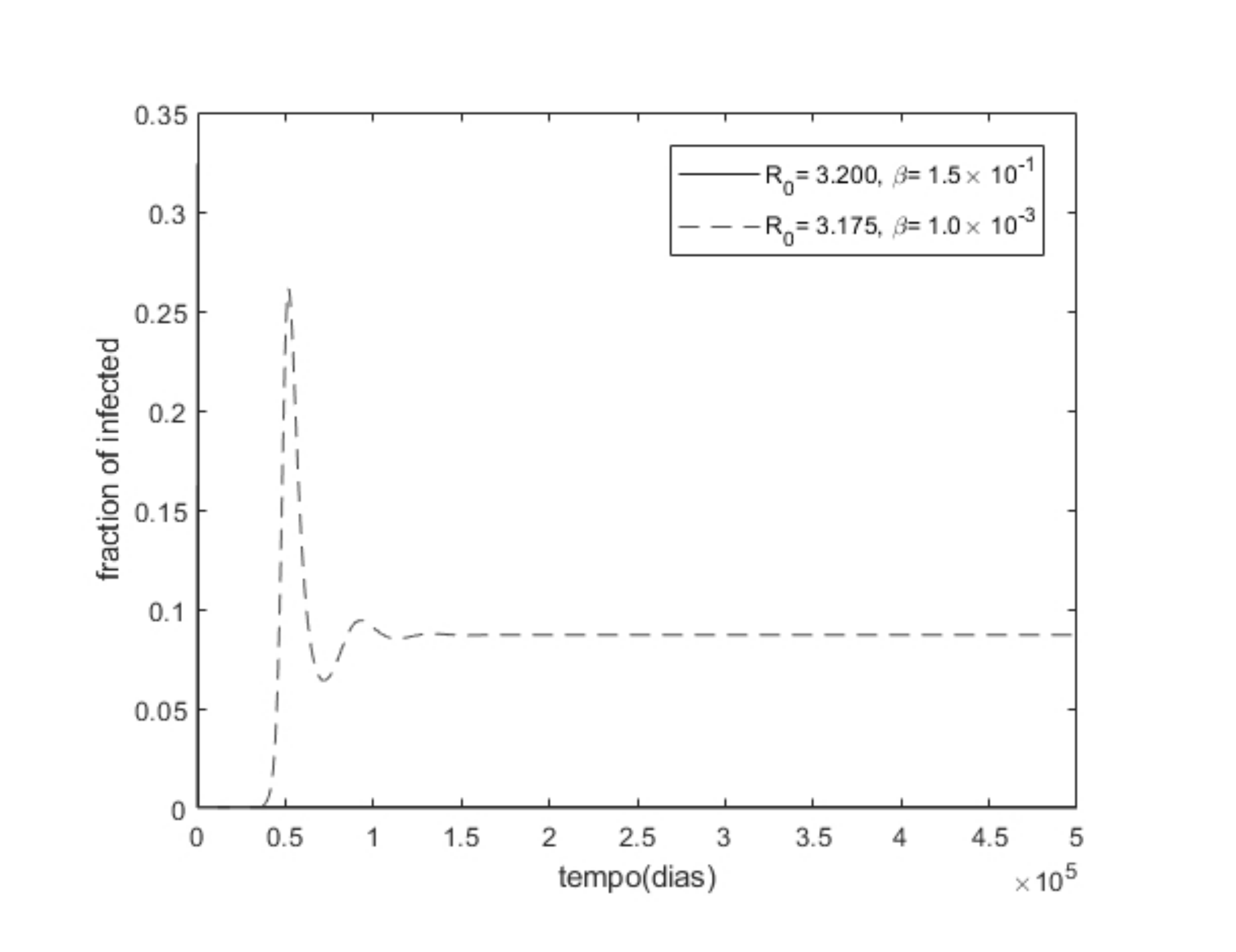}
	\caption{Elimination of the virus-1 after passing trough its first minimum.}
\end{figure}

\section{Conclusions}
The concept of a deterministic Critical Size Community, as introduced in this paper, provides 
an alternative potential mechanism for the disease fade out phenomena. It also allows the characterization of the disease persistence within a community in terms of  population size and disease parameters. 
This characterization is shown in Figure (\label{figccs}). The graphic shows for example, that two diseases with the same basic reproductive number $R_0$ can have rather different dynamics. This is, from a mathematical point of view, consequence of the non-linear scaling properties of the SIR system and has drastic consequences for the disease persistence.\par 

 The results also provide evidence that the Critical Community Size is an important component for the viral competition and natural selection processes. While for the PCE, the virus with higher $R_0$ will eliminate  the other viruses with smaller $R_0$, the concept of CCS shows that the dynamic is in fact more complex. The CCS allows for the extinction of any virus population reaching the minimum threshold value. While still predicting that only one virus type will persist in the long term, it shows that the survivor type needs no longer to be the one with the highest $R_0$. \par 
 
\label{conclusions}

\section{APPENDIX A}
\begin{Proposition}
	The time ordered local minima of $y(t)$ form a monotonic increasing sequence. Analogously the time ordered local maxima form a monotonic decreasing function.
\end{Proposition}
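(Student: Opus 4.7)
The plan is to exploit the Lyapunov function $V$ recalled just above the statement. The main idea is that every critical point of $y(t)$ with $y>0$ lies on the vertical line $\{s=s^*\}$, and on this line $V$ restricts to a strictly convex function of $y$ with minimum at $y=y^*$; strict decrease of $V$ along the flow then translates into the desired monotonicity of the extremal values.

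First I would locate and classify the extrema. From $y'=y(\beta s-(\mu+\gamma))$, any $t$ with $y'(t)=0$ and $y(t)>0$ satisfies $s(t)=s^*=1/R_0$. Differentiating once more at such a point gives $y''=\beta y\, s'=\beta y(\mu+\gamma)(y^*-y)$, so local minima occur at values $y<y^*$ and local maxima at values $y>y^*$. In particular, two consecutive minima both lie in $(0,y^*)$ and two consecutive maxima both lie in $(y^*,\infty)$.

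Second, I would show that $V$ decreases strictly between consecutive critical points of $y$. A direct computation, using the equilibrium identities $\beta s^*=\mu+\gamma$ and $\mu(1-s^*)=\beta s^*y^*$, yields
$$\frac{dV}{dt} \;=\; -\,\frac{(\mu+\beta y^*)(s-s^*)^2}{s}\;\le\; 0,$$
with equality exactly on $\{s=s^*\}$. On this line, however, $\dot s=(\mu+\gamma)(y^*-y)\neq 0$ whenever $y\neq y^*$, so any trajectory distinct from $E_1$ touches $\{s=s^*\}$ only transversally at isolated instants. Hence between any two consecutive critical times of $y$ the map $t\mapsto V(s(t),y(t))$ is strictly decreasing. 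Evaluating at these times (where $s=s^*$) reduces $V$ to the one-variable function $V(s^*,y)=s^*+y^*\bigl(y/y^*-\ln(y/y^*)\bigr)$, which is strictly decreasing on $(0,y^*)$ and strictly increasing on $(y^*,\infty)$. Combined with the strict descent of $V$ along the orbit, this immediately forces the $y$-values at consecutive minima to increase and those at consecutive maxima to decrease.

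The main obstacle is the strict-decrease step: the naive Lyapunov inequality only gives $\dot V\le 0$, so one must verify that the orbit never lingers on $\{s=s^*\}$. The transversality observation above handles this (in effect a LaSalle-type argument for the two-dimensional system). Everything else is routine bookkeeping with the one-dimensional convex slice $V(s^*,\cdot)$.
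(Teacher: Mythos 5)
Your proof is correct and follows essentially the same route as the paper: critical points of $y$ lie on the line $\{s=s^*\}$, minima and maxima are separated by $y^*$, and the monotonicity of the slice $V(s^*,\cdot)$ combined with the decrease of $V$ along orbits gives the conclusion. Your version is in fact slightly more complete, since you compute $\dot V = -(\mu+\beta y^*)(s-s^*)^2/s$ explicitly and justify the \emph{strict} decrease of $V$ between critical times via transversality to $\{s=s^*\}$, a point the paper's proof (which simply invokes convexity and the Lyapunov property) leaves implicit.
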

\begin{proof} Since 
 $\displaystyle \frac{\p^2 V(s,y)}{\p y^2} = \frac{y^*}{y^2} > 0 $ 
 the $E_1$ Lyapounov function is convex in $y$ \cite{korobeinikov2002lyapunov}.  \par 
 Let $y(t_1)$ and $y(t_2)$ be two consecutive local minima of $y(t)$. Since they are minima $y^{\, \prime}(t_i)=0$ and $y^{\, \prime \prime}(t_i) > 0 $ , $i=1,2$. This implies $s(t_i)=\frac{1}{R_0}$ and $y(t_i) < y^*$, $i=1,2$ . Since $V(s,t)$ is a Lyapounov function, it  is a decreasing function along the flow. Therefore   $t_2 > t_1$ implies that $V(1/R_0,y(t_2)) < V(1/R_0,y(t_1))$ 
 Also, $E_1$ is a global minimum of $V(s,y)$, it follows by convexity that
 $V(1/R_0,y)$ is a $y$ decreasing function for $y < y^*$ and therefore
 $y(t_2) > y(t_1) \, .$ The proof for the maxima is similar.

\end{proof}

\bibliographystyle{amsplain}
\nocite{*}
\bibliography{artigo}

\end{document}